\documentclass[aps,pra,twocolumn,superscriptaddress]{revtex4}
\usepackage{amsfonts,epsfig}
%\parskip 0.15 truein
%\usepackage{hyperref}
%\usepackage{latexsym}
\usepackage{amsmath}
\usepackage{amssymb}
\usepackage{amsfonts}
\usepackage{amsthm}
\usepackage{mathrsfs}
\usepackage{color,verbatim,graphics}
\usepackage{url}
\usepackage{multirow}

\setlength{\textheight}{8.72in}
\setlength{\textwidth}{6.8in}
\setlength{\topmargin}{0.0in}
\setlength{\headheight}{0.0in}
\setlength{\headsep}{0.3in}
\setlength{\oddsidemargin}{-.14in}
\setlength{\parskip}{0.3mm}

%\newcommand{\autolabel}[1]{\label{eq:#1}}

%Quantum stuff

%Uncomment if not using Qcircuit
\newcommand{\ket}[1]{\left |  #1 \right \rangle}
\newcommand{\bra}[1]{\left \langle #1  \right |}

\newcommand{\norm}[1]{\left|\left|#1\right|\right|}

\newcommand{\etal}{{\it{et al.}}}

\newcommand{\be}{\begin{equation}}
\newcommand{\ee}{\end{equation}}
\newcommand{\ba}{\begin{eqnarray}}
\newcommand{\ea}{\end{eqnarray}}
\newcommand{\ban}{\begin{eqnarray*}}
\newcommand{\ean}{\end{eqnarray*}}

%Theorem types

\newtheorem*{theorem*}{Theorem}

\newcommand{\bracket}[3]{\langle#1|#2|#3\rangle}

\newcommand{\BB}{\mathcal{B}}

\newcommand{\II}{\mathbb{I}}

\newcommand{\pa}[2]{\Pi^{A_#1}_{#2}}
\newcommand{\pb}[2]{\Pi^{B_#1}_{#2}}

\begin{document}

\title{Robust Self Testing of Unknown Quantum Systems into Any Entangled Two-Qubit States}

\author{Tzyh Haur Yang}
\affiliation{Centre for Quantum Technologies, National University of Singapore, 3 Science drive 2, Singapore 117543}
\author{Miguel Navascu\'es}
\affiliation{School of Physics, University of Bristol, Tyndall Avenue, Bristol BS8 1TL (U.K.)}

\begin{abstract}
Self testing is a device independent approach to estimate the state and measurement operators, without the need to assume the dimension of our quantum system. In this paper, we show that one can self test black boxes into any pure entangled two-qubit state, by performing simple Bell type experiments. The approach makes use of only one family of two-inputs/two-outputs Bell inequalities. Furthermore, we outline the sufficient conditions for one to self test any dimensional bipartite entangled state. All these methods are robust to small but inevitable experimental errors.
\end{abstract}

\maketitle

%%%%%%%%%%%%%%%%%%%%%%%%%%%%%%%%%%%%%%%%%%%%%%
\section{Introduction}

In the device independent approach to Quantum Information Processing (QIP), no assumptions are made about the states under observation, the experimental measurement devices, or even the dimensionality of the Hilbert spaces where such elements are defined. Rather, the security or reliability of a given quantum communication protocol is established from the statistics generated when several space-like separated parties measure a shared quantum state. Initially motivated by the need to take into account experimental imperfections in Quantum Key Distribution (QKD) \cite{norbert}, device-independent QIP has experienced a rapid growth in the last few years. Protocols such as QKD \cite{qkd1,qkd2,qkd3,qkd4}, randomness generation \cite{random1,random2}, entanglement swapping \cite{rafael} and teleportation \cite{teleport}, originally implemented with trusted measurement devices, have been successfully translated to the device-independent realm.

In 2004, the authors of \cite{yao} proposed a device independent scheme to certify the presence of a quantum state and the structure of a set of experimental measurement operators. This inspired further works by \cite{Bardyn2009} and \cite{McKague2012}, who respectively considered (and solved) the problem of estimating the fidelity(norm difference) between a physical bipartite state and the two-qubit maximally entangled state given the violation of the Clauser-Horn-Shimony-Holt (CHSH) Bell inequality \cite{chsh} and only Local Operations and Classical Communication (LOCC) are allowed. In \cite{McKague2012}, it was also introduced a general framework for quantum self testing, whereby two or more space-like separated parties are said to share a given state $\ket{\Psi}$ iff there exist local transformations which allow them to distill it. Independently, a general scheme for robust self testing in the context of binary nonlocal games, similar to CHSH, was presented in \cite{michigan}. Most recently, the authors in \cite{reichardt} showed that the robustness of such games are optimal and can be extended to cover the scenarios when adversaries try to use the memory in the apparatus to cheat.

%All the above results apply only to maximally entangled qubit states, which are arguably the most entangled and most nonlocal resource. It is tempting to ask what is the relationship between self testing and nonlocality. Furthermore, it is of fundamental importance to consider to possibility of self testing partially entangled qubit states, going a step further, whether one can self any pure entangled states. The possibility of self testing any arbitrary quantum system gives a sense of \note{closeness} to quantum mechanics.

The above results show how maximally entangled qubit states can be self tested in different nonlocality scenarios. Nothing is said, however, about the possibility of self testing more general quantum states. Can partially entangled qubit states be self tested as well? Or, going a step further, can any pure entangled state be self tested? A positive answer to these questions would pave the way to a complete device independent reformulation of quantum mechanics.

In this paper, we show that self testing is not only possible for maximally entangled qubits, but it can be done for arbitrary bipartite entangled qubit states, using only a simple family of two-input/two-output Bell inequalities. Likewise, we identify sufficient conditions to self test high dimensional entangled states and provide a Bell scenario that allows one to self test general $d$ dimensional maximally entangled states. Note that recently and independently, Rafael {\it et. al.} in \cite{rafael2} have illustrated a method to self test Hardy inequality, which involves partially entangled qubit states. Their method however, are not known to be robust and only limited to Hardy's inequality.

\section{The self testing scenario}
Picture a scenario where two distant observers, Alice and Bob, perform measurements over a shared quantum state $\ket{\psi'}$ (since we do not assume the dimension, the state can be taken to be pure while the measurements are projective \cite{nielson}). Let $\{\Pi^x_a,\Pi^y_b\}$ be Alice's and Bob's Positive Operator Valued Measure (POVM) elements, where $(x,y)$ labels the different measurement settings; and $(a,b)$, the measurement outcomes. The statistics that they observe will thus be given by $p(a,b|x,y)=\bracket{\psi'}{\Pi^x_a\otimes \Pi^y_b}{\psi'}$. The self testing problem consists in deciding if the knowledge of $p(a,b|x,y)$ allows one to deduce the structure of the quantum system $\{\ket{\psi'},\Pi^x_a,\Pi^y_b\}$.

To do this, we need the concept of isometry. Isometry is a linear map, $\Phi$ which maps from a Hilbert space, $\mathcal{H}_1$ to another Hilbert space, $\mathcal{H}_2$ that preserves inner products, that is $\Phi:\mathcal{H}_1\rightarrow \mathcal{H}_2$. Since probabilities are invariant under isometry maps, any two quantum systems related by a local isometry must be regarded as identical in this formalism. Therefore, if such correlations $p(a,b|x,y)$ allow Alice and Bob to infer the existence of a local isometry $\Phi=\Phi_A\otimes\Phi_B$, a state $\ket{\psi}$ and projection operators $M^x_a,N^y_b$ (satisfying commutation relations of the type $[M^x_a,N^y_b]=0$) such that
\begin{eqnarray}
	&&\Phi(\ket{\psi'})=\ket{junk}\otimes\ket{\psi},\nonumber\\
	&&\Phi([\Pi^x_a\otimes \Pi^y_b]\ket{\psi'} )=\ket{junk}\otimes M^x_a \otimes N^y_b\ket{\psi},
\end{eqnarray}

\noindent we will then say that Alice and Bob have self tested the system $\{\ket{\psi},M^x_a,N^y_b\}\widetilde{=}\{\ket{\psi'},\Pi^x_a,\Pi^y_b\}$. Note that the junk state $\ket{junk}$ are any physical states which will be traced out subsequently and thus not taken into consideration.

Sometimes, the full knowledge of $p(a,b|x,y)$ is not necessary for self testing. Indeed, in \cite{McKague2012}, McKague \etal\; showed that, in a Bell experiment with CHSH violation close to the Tsirelson bound \cite{Tsirelson1980}, 
\begin{align}
	\bracket{\psi'}{A_0(B_0+B_1)+A_1(B_0-B_1)}{\psi'} \geq 2\sqrt{2}-\epsilon,
\end{align}
for reasonably small $\epsilon$, there exists a local isometry $\Phi$ which transforms the state and operators to a state $\epsilon'$-close to the two-qubit singlet and corresponding projective measurements. The operators $A_i$ and $B_j$ here can be any measurement operators with two outcomes(dichotomic observable) on Alice and Bob's side respectively. More concretely, 
\be
	\|\Phi(\ket{\psi'})-\ket{junk}\otimes \ket{\psi}\|\leq \epsilon',
%\|\Phi((M^x_a)'(N^y_b)'\ket{\psi'})-\ket{junk}\otimes M^x_aN^y_b\ket{\phi^{+}}\|\leq \epsilon',
\ee
\noindent where $\epsilon'=\epsilon'(\epsilon)$ satisfies $\lim_{\epsilon\to 0}\epsilon'=0$. In this case, we say that the self testing is robust.

Using this concept and framework, we will first show how one can self test any partially entangled two-qubit state using a more general family of Bell inequalities.

\section{Self Testing of Partially Entangled Qubits} 
To understand how one can self test any two qubit partially entangled state, it is instructive to consider the following simple, yet illuminating, scenario. 

Suppose that Alice and Bob share the state $\ket{\psi}=\cos\theta\ket{00}+\sin\theta\ket{11}$, and act on it with the (hermitian) Pauli matrices $X$ and $Z$ on both sides. Then one can check that the relations
\begin{align}
	Z_A\ket{\psi}&=Z_B\ket{\psi}, \nonumber\\
	\sin\theta X_A(I+Z_B)\ket{\psi}&=\cos\theta X_B (I-Z_A)\ket{\psi}. \label{conditions}
\end{align}
\noindent One can obviously generate more identities with this state and operators, but these two will be enough.

We wonder whether conditions (\ref{conditions}) are sufficient for self testing. In other words, picture a Bell scenario where Alice and Bob, from their shared correlations, infer the existence of pairs of dichotomic local observables (of unknown dimensionality) $\{X_A',Z_A'\}$ and $\{X_B',Z_B'\}$ which, acting on their state $\ket{\psi'}$, satisfy conditions (\ref{conditions}). Is this sufficient for Alice and Bob to conclude that $\ket{\psi'}\widetilde{=}\ket{\psi}$?

Surprisingly, the answer is yes. It can be verified that the circuit described in \cite{McKague2012,McKague2010,McKague2010phd}, reproduced here in Figure \ref{fig:circuit1}, allows Alice and Bob to transform the state $\ket{\psi'}_{AB}\ket{00}_{AB}$ into $\ket{junk}\otimes\ket{\psi}$. For simplicity, we refer readers to \cite{supplementary} Section C for the details of how this circuit works.

\begin{figure}[htbp!]
	\includegraphics[width=0.44\textwidth]{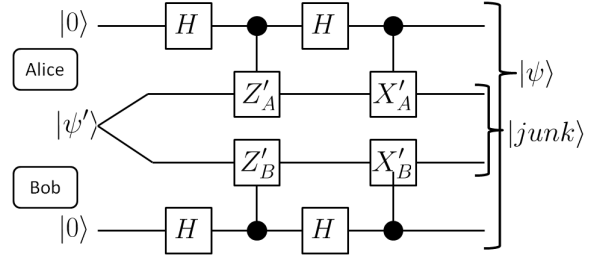}
	\caption{Local isometry, $\Phi=\Phi_A\otimes\Phi_B$, allowing Alice and Bob to self test their state. The gates $H$ are the Hadamard gate while the gates $Z_{A/B}$ and $X_{A/B}$ are the control $Z_{A/B}$ and control $X_{A/B}$ gates respectively. \label{fig:circuit1}}
\end{figure}

The problem now is to identify which correlations would allow Alice and Bob to derive relations (\ref{conditions}). Not surprisingly, all such correlations must violate a particular non-trivial Bell inequality maximally and uniquely. For our purpose, it is sufficient to consider a particular family of Bell inequalities, first studied in \cite{Acin2012}, parametrized as
\begin{align}
	\BB(\alpha) &\equiv \alpha A_0 + A_0(B_0+B_1)+A_1(B_0-B_1), \label{bell}
\end{align}
\noindent where $0\leq\alpha\leq2$. As proven in \cite{Acin2012}, the maximum quantum violation of (\ref{bell}) is given by $b(\alpha)\equiv \max_\phi\; \bracket{\phi}{\BB(\alpha)}{\phi} =\sqrt{8+2\alpha^2}$. We are ready to state our first result.

\begin{theorem*} \label{theorem1}
In any black box bipartite experiment achieving the maximum quantum violation of the Bell inequality (\ref{bell}), the corresponding quantum state is equivalent, up to local isometries, to the partially entangled state, $\cos\theta\ket{00}+\sin\theta\ket{11}$, with $\tan\theta=\sqrt{\frac{4-\alpha^2}{2\alpha^2}}$. Furthermore, this result is robust.
\end{theorem*}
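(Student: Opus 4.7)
My plan follows the template of \cite{McKague2012} for CHSH, but tilted so that the isometry outputs a partially entangled qubit state. The three ingredients are: (i) a sum-of-squares (SOS) decomposition of $b(\alpha)\mathbb{I}-\BB(\alpha)$, which forces algebraic relations on any state saturating the bound; (ii) a translation of these relations into the two identities (\ref{conditions}) for a suitable choice of ``Pauli surrogates'' built from $A_0,A_1,B_0,B_1$; and (iii) the circuit of Figure \ref{fig:circuit1}, which converts those identities into the isometry certifying the target state.

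For step (i), I would look for a decomposition
\begin{equation}
b(\alpha)\mathbb{I}-\BB(\alpha)=\sum_i \lambda_i P_i^\dagger P_i,\qquad \lambda_i>0,
\end{equation}
where the $P_i$ are low-degree polynomials in the dichotomic observables $A_0,A_1,B_0,B_1$. The NPA hierarchy at the lowest nontrivial level is a natural tool for guessing it. If $\BB(\alpha)$ attains $b(\alpha)=\sqrt{8+2\alpha^2}$ on $\ket{\psi'}$, then $\bracket{\psi'}{P_i^\dagger P_i}{\psi'}=0$, so $P_i\ket{\psi'}=0$ for every $i$. For step (ii), defining the surrogates $Z_A':=A_0$ and $X_A':=A_1$ on Alice's side, together with the normalized combinations $Z_B'\propto B_0+B_1$ and $X_B'\propto B_0-B_1$ on Bob's side (the normalization constants being fixed by the optimal Bob measurement angles computed in \cite{Acin2012}), I expect the vanishing $P_i\ket{\psi'}=0$ to reduce to exactly
\[
Z_A'\ket{\psi'}=Z_B'\ket{\psi'},\qquad \sin\theta\,X_A'(\mathbb{I}+Z_B')\ket{\psi'}=\cos\theta\,X_B'(\mathbb{I}-Z_A')\ket{\psi'},
\]
with $\theta$ as in the theorem statement.

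Having the two relations, step (iii) reduces to the bookkeeping already performed in \cite{McKague2012,supplementary}: feed $\ket{\psi'}_{AB}\otimes\ket{00}_{A'B'}$ into the circuit of Figure \ref{fig:circuit1}, propagate the Hadamards and the controlled-$Z'$/controlled-$X'$ gates, and use the two identities to collapse each branch of the superposition. The only substantive change relative to the CHSH case is that the output weights of $\ket{00}$ and $\ket{11}$ come out as $\cos\theta$ and $\sin\theta$ (rather than equal), because the usable relation between Alice's and Bob's ``$X$'' operators is now weighted rather than pure anticommutation.

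For robustness, a violation within $\epsilon$ of $b(\alpha)$ implies $\bracket{\psi'}{\sum_i\lambda_i P_i^\dagger P_i}{\psi'}\leq\epsilon$, and hence $\|P_i\ket{\psi'}\|=O(\sqrt{\epsilon})$. I would then propagate these error terms through each unitary gate of the circuit via the triangle inequality, obtaining $\|\Phi(\ket{\psi'})-\ket{junk}\otimes(\cos\theta\ket{00}+\sin\theta\ket{11})\|=O(\sqrt{\epsilon})$. The step I expect to be hardest is step (i): because of the linear term $\alpha A_0$ and the $\alpha$-dependent optimal measurement angles, the SOS cannot be symmetric in $A_0\leftrightarrow A_1$ or in $B_0\leftrightarrow B_1$, and one must choose the weights so that its vanishing conditions give exactly the two identities (\ref{conditions})—not any weaker or stronger set—since the whole self-testing argument rests on precisely this correspondence.
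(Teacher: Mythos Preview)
Your proposal is essentially identical to the paper's proof: SOS decomposition of $b(\alpha)\mathbb{I}-\BB(\alpha)$, extraction of conditions (\ref{conditions}) via the surrogates $Z_A=A_0$, $X_A=A_1$, $Z_B=(B_0+B_1)/(2\cos\mu)$, $X_B=(B_0-B_1)/(2\sin\mu)$ with $\tan\mu=\sin 2\theta$, then the circuit of Figure~\ref{fig:circuit1}, and robustness via the bounds $\|P_\lambda\ket{\psi'}\|\leq\epsilon$. The one point you do not flag but will need is that $Z_B,X_B$ so defined are not a priori dichotomic: in the exact case the paper deduces their unitarity on $\ket{\psi'}$ from the SOS relations themselves (plus the algebraic identity $Z_B^2\cos^2\mu+X_B^2\sin^2\mu=\II$), while in the robust case one must regularize to $Z_B'/|Z_B'|$ and $X_B'/|X_B'|$ and control the additional error this introduces before running the triangle-inequality bookkeeping through the circuit.
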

\begin{proof}
	Let us first rewrite the Bell operator as $\overline{\BB}(\alpha)\equiv b(\alpha)-\BB(\alpha)$. By definition, $\overline{\BB}(\alpha)$ is positive semidefinite. It can be shown (see \cite{supplementary} Section A) that $\overline{\BB}(\alpha)$ can be expressed as $\overline{\BB}(\alpha)=\sum_\lambda P_\lambda^\dag P_\lambda$, where the $P_\lambda$s are linear functions of the operators $\mathbb{I},A_i,B_j$ and $A_iB_j$, and $i,j\in\{0,1\}$ (of course, each $P_\lambda$ depends on $\alpha$). In the event of maximum violation of the Bell operator with the state $\ket{\psi'}$, we must have $\bracket{\psi'}{\overline{\BB}(\alpha)}{\psi'}=0$, and, consequently, $P_\lambda\ket{\psi'}=0$ for all $\lambda$. By working out the exact expression for $P_\lambda$, it can be shown that such identities imply relations (\ref{conditions}) with $\tan\theta=\sqrt{\frac{4-\alpha^2}{2\alpha^2}}$ (see \cite{supplementary} Section B). Since conditions (\ref{conditions}) are sufficient for self testing, this completes the proof. 

	As for the robustness of the result, note that, if $\bra{\psi'}\overline{\BB}(\alpha)\ket{\psi'}=\epsilon^2>0$, then we have that $\|P_\lambda\ket{\psi'}\|\leq \epsilon$ for all $\lambda$. For small values of $\epsilon$, this condition implies that the quantum circuit depicted in Fig. \ref{fig:circuit1} would return a quantum state close to $\ket{\psi}$ (see \cite{supplementary} Section C).
\end{proof}

Our result covers the singlet self testing scenario considered in \cite{McKague2012} as a special case when $\alpha=0$, and extends it to any pure entangled state. A consequence of our results is that all states violating the Bell inequality (\ref{bell}) are unique, up to isometry. Complementing the findings of \cite{masanes2005}, which show that any two-setting/two-outcomes Bell inequality can be maximally violated by pure entangled qubits, our result suggests that this is also necessarily the case, at least for the family given by (\ref{bell}).

\section{High dimensional self testing}
There are two steps in this part: 1)We need to generalize the circuit in Figure \ref{fig:circuit1} for higher dimensional states; 2)We need to define an appropriate Bell scenario, possibly with higher number of inputs/outputs, whose correlation allows us to self test. 
\begin{figure}[htbp!]
	\includegraphics[width=0.4\textwidth]{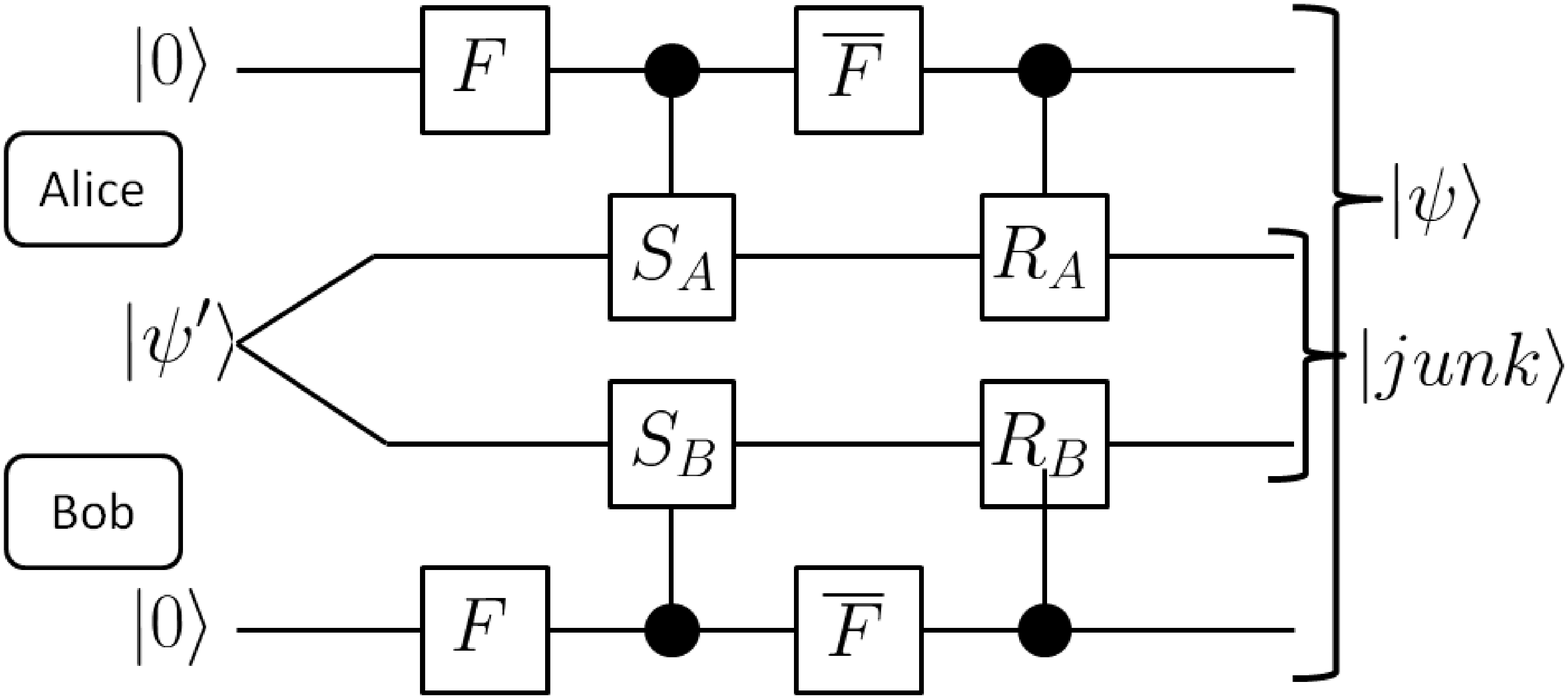}
	\caption{ Circuit to self test pure entangled state of any dimension. The gates $F$ and $\overline{F}$ here are the generalized Fourier transform gates in higher dimensions. The $S_{A/B}$ and $R_{A/B}$ gates here are the control $S_{A/B}$ and control $R_{A/B}$ gates respectively. \label{fig:circuit2}}
\end{figure}
The first step is straightforward. The generalization of the circuit in Figure \ref{fig:circuit1} is as shown in Figure \ref{fig:circuit2}. The Fourier transform gate $F$ and its inverse $\overline{F}$ are defined as $F\ket{j}=\sum_{k=0}^{d-1} \omega^{jk}\ket{k}/\sqrt{d}$ where the $d$ is the dimension of the pure state we wish to self test our black box into and $\omega$ is the $d$-th root of the unity. The action of the control phase gates $R$ and $S$ are given by $\ket{k}\ket{\psi'}\rightarrow \ket{k}X^{(k)}\ket{\psi'}$ and $\ket{k}\ket{\psi'}\rightarrow \ket{k}Z^k\ket{\psi'}$ respectively. In Supplmentary Material Section D, we show that as long as $X^{(k)}$ and $Z$ are unitary operators satisfing the following relations,
\begin{align}
	Z_{A/B} &= \sum_{i=0}^{d-1} \omega^i P_{A/B}^{(i)}, \nonumber\\
	P_A^{(i)}\ket{\psi'}&=P_B^{(i)}\ket{\psi'}, \;\;\forall i, \nonumber \\
	X_{A}^{(i)}P_{B}^{(i)}\ket{\psi'}&=\tan\delta_i (X_{B}^{(i)})^\dag P_{A}^0 \ket{\psi'}, \;\;\forall i  \label{highdd}
\end{align}
for any angles $\delta_i$, then the state $\ket{\psi'}$ can be self tested into a pure entangled states of dimension $d$. In \cite{supplementary} Section D, we also show how one can use certain nonlocal correlations between Alice and Bob $p(a,b|x,y)$ to deduce the conditions in (\ref{highdd}).

We proceed now to the second step, which is to find the correlation allowing us to deduce the existence of $X^{(k)},Z$. One possibility would be to look for a family of Bell inequalities with a higher number of measurement settings and outcomes, and figure out its Sum of Squares (SOS) representation $\sum_\lambda P^\dag_\lambda P_\lambda$. Unfortunately, finding the exact SOS decomposition of a general Bell inequality is very difficult, if not impossible. Indeed, numerical evidence suggests that certain Bell inequalities in the three settings/two outcomes scenario may not have an optimal SOS decomposition \cite{I3322}. In this paper, we will only demonstrate how one can self test high dimensional maximally entangled states, namely $\ket{\psi}=\sum_{j=0}^{d-1} \ket{ii} /\sqrt{d}$.

The technical details can be found in the \cite{supplementary} Section D.2. Here we will only sketch the intuition and simplified proof for the case when $d=4$. The Bell experiment we consider involves 3 measurements, $(A_0,A_1,A_2)$ on Alice's side and $(B_0,B_1,B_2)$ on Bob's side. First of all, consider the measurements $(A_0,A_1,B_0,B_1)$ with the correlations shown in Figure \ref{fig:correlation1}.

\begin{figure}[htbp!]
	\includegraphics[width=0.25\textwidth]{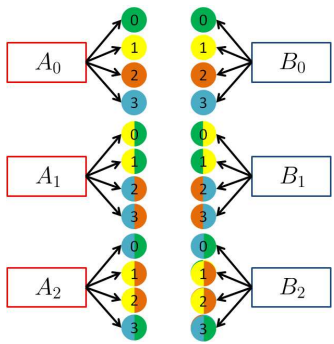}
	\caption{(Color online) The three measurements performed by Alice and Bob. All the outcomes have non zero correlations only with outcomes sharing the same color. For instance, all outcomes $\pa{0}{1}$ have zero correlation with $\pb{0}{r'}$ except $r'\in\{0,1\}$.  \label{fig:correlation1}}
\end{figure}
Due to the nature of the correlations, the first two outcomes of the four measurements $(A_0,A_1,B_0,B_1)$ in Figure \ref{fig:correlation1} can be grouped into a subspace of the total Hilbert space and be used to self test half of the maximally entangled state $\ket{\psi}$. On the other hand, the third and last outcomes can be grouped to self test the remaining half of the entangled states, as shown below
\begin{align}
	\ket{\psi}=\dfrac{1}{2} \Big( \underbrace{\ket{00} + \ket{11}}_\textrm{outcomes 0,1} + \underbrace{\ket{22}+\ket{33}}_\textrm{outcomes 2,3} \Big).
\end{align}
To complete the self testing of the state, we require the measurements $A_2$ and $B_2$ which has the correlations as shown in Figure \ref{fig:correlation1}. This time, the outcomes 1 and 2 of the measurements $(A_0,A_2,B_0,B_2)$ can be used to self test the part
\begin{align}
	\ket{\psi}=\dfrac{1}{2} \Big( \ket{00} + \underbrace{ \ket{11}+\ket{22}}_\textrm{outcomes 1,2} +\ket{33}  \Big).
\end{align}
With this, we then successfully self test the maximally entangled states with $d=4$. It is easy to see that to self test dimension $d$ maximally entangled states, we only require 3 measurements on Alice side $(A_0,A_1,A_2)$ and on Bob's side $(B_0,B_1,B_2)$. Firstly, $A_0,A_1,B_0,B_1$ are used to self test separately the subspace labelled with $(0,1), (2,3),\ldots,(d-2,d-1)$. On the other hand, the measurements $A_0,A_2,B_0,B_2$ are used to self test the subspaces labelled with $(1,2),(3,4),\ldots,(d-3,d-2)$. It is easy to see that this is then sufficient to self test the whole state.

\section{Discussion} 
There are many instances when one is interested to determine the state of the system based only on the correlations. For instance, to determine the structure of the state \cite{brunner} or simply the entanglement within the system \cite{Bardyn2009}. To have a Bell inequality which is violated maximally only by a particular state is often difficult, even semi device independently (assuming the dimension). Our results provide a simple way to designing a Bell experiment to identify uniquely the pure quantum state if 4 measurement settings are allowed.

Another important implication of our result is related to a well known result in \cite{masanes2005}, showing that it is sufficient to consider 2 qubits together with projective measurements to obtain the extremal correlations in quantum set. Here, our results show that in fact it is necessary: all (extremal) quantum systems violating the inequality in (\ref{bell}) maximally are in fact a qubit system, modulo some local isometry. An open question is then whether this is true for other type of inequality possibly with higher number of settings.

Lastly, Algebraic Quantum Field Theory shows that the concept of Hilbert space becomes redundant once operator norms are specified \cite{halvorson}. In the same spirit, our result suggests that, for experiments involving demolition measurements, even the concept of quantum states is unnecessary: the knowledge of a finite amount of probabilities may be enough to specify the system completely. 

\section{Conclusion} 
In this paper, we show explicitly how to self test any pure bipartite qubit states. Furthermore, we also show how one can self test any dimensional maximally entangled states with a scheme which is remarkably economical in terms of quantum operations: namely, the number of measurement settings is always three, and the number of measurement outcomes only grows linearly with $d$. 

\section{Acknowledgement} 
We would like to acknowledge interesting discussions with Antonio Ac\'in, Valerio Scarani and Matthew McKague. This work was supported by the National Research Foundation and the Ministry of Education, Singapore.

\appendix

\onecolumngrid

\subsection{SOS Expansion \label{append:sos}}
\label{SOS}
For clarity purposes, we reproduce the Bell inequality here:
\begin{align}
	\BB(\alpha) &\equiv \alpha A_0 + A_0(B_0+B_1)+A_1(B_0-B_1), \label{bell2}
\end{align}
where the maximum quantum violation is given by $b(\alpha)=\sqrt{8+2\alpha^2}$. 

Redefine the Bell operator as $\overline{\BB}(\alpha)\equiv b(\alpha)-\BB(\alpha)$. We would like to show that we can write $\overline{\BB}(\alpha)$ as the following sum of squares (SOS)
\begin{align}
	\overline{\BB}(\alpha) = \sum_{\lambda=1}^5 P_\lambda^\dag P_\lambda, \label{sos}
\end{align}
where $P_\lambda$ are polynomials of the form $P_\lambda=\vec{q}_\lambda \cdot \vec{V}$, where $\vec{q}_\lambda\in {\mathbb R}^9$ and $\vec{V}$ is a vector operator of the form
\begin{align}
	\vec{V}=(\mathbb{I},A_0,A_1,B_0,B_1,A_0B_0,A_0B_1,A_1B_0,A_1B_1)^T. \label{v}
\end{align}
A systematic way of obtaining numerical approximations to the SOS decomposition of a Bell operator can be found in [24,25] but we will simply provide the exact form of $\vec{q}_\lambda$ such that (\ref{sos}) holds. 

We first define 5 vectors $\vec{r}_\lambda$ as follow
\begin{align*}
	\vec{r}_1 &= \left( \begin{array}{ccccccccc} 
		0 ,&
		\dfrac{-2}{\sqrt{1+s^2}},&
		0,&
		1,&
		1,&
		0,&
		0,&
		0,&
		0\end{array}\right),  \\
		\vec{r}_2 &= \left( \begin{array}{ccccccccc} 
		\dfrac{-1}{\sqrt{1+s^2}} ,&
		\dfrac{1}{c\sqrt{1+s^2}},&
		\dfrac{s^2}{c\sqrt{1+s^2}},&
		-c^{-1},&
		0,&
		1,&
		0,&
		0,&
		0  \end{array}\right), \\
		\vec{r}_3 &= \left( \begin{array}{ccccccccc} 
		\dfrac{-1}{\sqrt{1+s^2}} ,&
		\dfrac{-1}{c\sqrt{1+s^2}},&
		\dfrac{-s^2}{c\sqrt{1+s^2}},&
		c^{-1},&
		0,&
		0,&
		1,&
		0,&
		0  \end{array}\right),  \\
		\vec{r}_4 &= \left( \begin{array}{ccccccccc} 
		\dfrac{-1}{\sqrt{1+s^2}} ,&
		\dfrac{-s^2}{c\sqrt{1+s^2}},&
		\dfrac{-1}{c\sqrt{1+s^2}},&
		c^{-1},&
		0,&
		0,&
		0,&
		1,&
		0  \end{array}\right), \\
		\vec{r}_5 &= \left( \begin{array}{ccccccccc} 
		\dfrac{1}{\sqrt{1+s^2}}, &
		-\dfrac{c+c^{-1}}{\sqrt{1+s^2}},&
		\dfrac{-1}{c\sqrt{1+s^2}},&
		c^{-1},&
		0,&
		0,&
		0,&
		0,&
		1  \end{array}\right),
\end{align*}
where $s=\sin(2\theta)$ and $c=\cos(2\theta)$. Now we can define the vectors $\vec{q}_\lambda$ as follow. 
\begin{align}
	\vec{q}_1&= \dfrac{ \gamma}{20\sqrt{2}}(\vec{r}_5-\vec{r}_4) -\dfrac{2}{5}\vec{r}_1 , \nonumber\\
	\vec{q}_2&= \left(\dfrac{3-\overline{c} }{16}\right)^{1/4}(\vec{r}_1+c\vec{r}_2-c\vec{r}_3), \nonumber\\
	\vec{q}_3 &= \dfrac{2\gamma-25c\sqrt{3-\overline{c}}}{30\sqrt{2}}\vec{r}_1+\dfrac{3}{10}(\vec{r}_5-\vec{r}_4),\nonumber \\
	\vec{q}_4 &=  \dfrac{35}{100}(\vec{r}_3+\vec{r}_2)-\dfrac{5c\sqrt{3-\overline{c}}}{14\sqrt{2}}\vec{r}_1, \nonumber\\
	\vec{q}_5 &= \dfrac{\sqrt{49\gamma^2+9800c\gamma\sqrt{3-\overline{c}} + \omega } }{420}\vec{r}_1, \label{qr}
\end{align}
where $\overline{c}=\cos(4\theta)$, $\gamma=\sqrt{ (75+25\overline{c})\sqrt{6-2\overline{c}}-72}$ and $\omega= 18125\cos(8\theta)-72500\cos(4\theta)-108706$. Finally, if we identify $\alpha=\dfrac{4\sqrt{2}}{\sqrt{3-\overline{c}}}$, then we recover the identity in (\ref{sos}).

\subsection{From SOS to self testing}
\label{SOS2ST}
In (\ref{append:sos}), we have obtained the SOS expression for the Bell inequality in (\ref{bell2}). Suppose that we obtain the maximum violation, $\bracket{\psi'}{\overline{\BB}(\alpha)}{\psi'}=\sum_\lambda \bracket{\psi'}{P_\lambda^\dag P_\lambda}{\psi'} = 0$. From the relations between $\vec{r}_\lambda$ and $\vec{q}_\lambda$ in (\ref{qr}), it must then be the case that $\vec{r}_\lambda\cdot\vec{V}\ket{\psi'}=0$, for all $\lambda$. 

To show that these relations can be used for self testing, it is sufficient to show the existence of dichotomic operators $X_A,Z_A,X_B$ and $Z_B$ such that the following relations in main text (Equations (4) main text) hold. For clarity, we reproduce the relations here.
\begin{align}
	Z_A\ket{\psi}&=Z_B\ket{\psi}, \label{cond1}\\
	\sin\theta X_A(I+Z_B)\ket{\psi}&=\cos\theta X_B (I-Z_A)\ket{\psi}, \label{cond3}
\end{align}

\noindent Let us define the operators as 

\begin{align}
	Z_A&=A_0, \hspace{1cm} &X_A &= A_1, \nonumber\\
	Z_B &=\dfrac{B_0+B_1}{2\cos\mu},&X_B &= \dfrac{B_0-B_1}{2\sin\mu}, \label{defineoperators}
\end{align}

\noindent where $\tan\mu=\sin(2\theta)$. The last two operators trivially satisfy the relation:

\begin{eqnarray}
&Z_B^2\cos^2(\mu)+X_B^2\sin^2(\mu)=\II,\label{extra_cond}
\end{eqnarray}

\noindent Also, it can be easily verified that eqs. (\ref{cond1}-\ref{cond3}) are equivalent to the following the identities:
\begin{align}
	\vec{r}_1\cdot\vec{V}\ket{\psi'}=0 , \nonumber\\
	\left[\dfrac{\sin\theta}{2\cos\mu} (\vec{r}_4+\vec{r}_5) + \dfrac{\cos\theta}{2\sin\mu} (\vec{r}_2-\vec{r}_3+\vec{r}_1) \right]\cdot\vec{V} \ket{\psi'}=0 . \label{sostransform}
\end{align}
However, it is still left to show that the operators defined in (\ref{defineoperators}) are dichotomic (i.e., unitary and hermitian). $Z_A$ and $X_A$ are clearly the case. From $\vec{r}_1\cdot\vec{V}\ket{\psi'}=0$, we have $(B_0+B_1)\ket{\psi'}=2\cos\mu A_0\ket{\psi'}$. Since $A_0$ is unitary and hermitian, it is necessary that $Z_B\equiv (B_0+B_1)/(2\cos\mu)$ is also unitary and hermitian. This can be seen easily by expanding the state $\ket{\psi'}$ locally in terms of Schmidt decomposition. From condition (\ref{extra_cond}), it then follows that $X_B$ is also unitary and hermitian.

To find out the state $\ket{\psi}=\cos\theta\ket{00}+\sin\theta\ket{11}$ that the correlations self test, we recall that $\alpha=\dfrac{4\sqrt{2}}{\sqrt{3-\cos(4\theta)}}$. Inverting the equation, we find that $\tan\theta=\sqrt{\frac{4-\alpha^2}{2\alpha^2}}$.

Thus, we have shown that if a state violates the $\BB(\alpha)$ maximally, then the state is equivalent, up to isometry, to the state $\ket{\psi}=\cos\theta\ket{00}+\sin\theta\ket{11}$ where $\tan\theta=\sqrt{\frac{4-\alpha^2}{2\alpha^2}}$.

\subsection{Robustness of SOS self testing}
\label{robust}

In this section, we would like to address the situation when the Bell violation is not maximal but close to it, possibly due to some experimental errors. Suppose that, instead of maximum violation, we have $\bracket{\psi}{\BB(\alpha)}{\psi'}\geq b(\alpha)-\epsilon^2$, i.e., $\bracket{\psi'}{\overline{\BB}(\alpha)}{\psi'} \leq \epsilon^2$. Is our system still close to the self tested state $\ket{\psi}$? 

\subsubsection{Sufficient Conditions for Robustness of self testing}

Using the SOS expression, we obtain $\sum_\lambda \bra{\psi'}P^\dagger_{\lambda} P_\lambda \ket{\psi'} \leq \epsilon^2$. We can hence derive an upper bound of the norm, $\bra{\psi'}P_\lambda^\dag P_\lambda \ket{\psi'} \leq \epsilon^2$, and so we have that
\begin{align}
	\left|\left|P_\lambda\ket{\psi'}\right|\right| \leq \epsilon. \label{bound}
\end{align}

Using (\ref{sostransform}), we obtain the following error terms
\begin{align}
	\norm{(Z_A-Z_B')\ket{\psi'}} \leq \epsilon, \\
	\norm{\Big(  \sin\theta X_A(I+Z_B')-\cos\theta X_B' (I-Z_A)        \Big)\ket{\psi'}} \nonumber\\
	\leq \left( \left|\dfrac{\sin\theta}{\cos\mu}\right| + \left|\dfrac{3\cos\theta}{2\sin\mu}\right| \right)\epsilon \equiv \epsilon_2 , 
\end{align}
where $Z_A,X_A,Z_B'$ and $X_B'$ have the same definitions as in (\ref{defineoperators}). However, at this point of time the operators $X_B'$ and $Z_B'$ are no longer unitary (although they are hermitian and satisfy eq. (\ref{extra_cond})), thus the prime notations. 
The above conditions are surprisingly sufficient for a robust self testing, as we will show below.

\subsubsection{Proof}
First of all, we would like to find unitary and hermitian operators $X_B$ and $Z_B$ such that they behave almost in a similar way to $X_B'$ and $Z_B'$ respectively. For $Z_B'$, let us define $Z_B\equiv Z_B'/|Z_B'|$, where the subspaces of $Z_B'$ with zero eigenvalues are defined to have eigenvalue 1 for the new operator. This new operator is both hermitian and unitary. Furthermore, it has the property that it acts in a very similar way to $Z_B'$ in the sense
\begin{align*}
	&\hspace{0.5cm}\norm{(Z_A-Z_B)\ket{\psi'}}= \\
	&=\norm{(\II-Z_AZ_B'/|Z_B'|)\ket{\psi'}}= \\
	&=\norm{(\II-Z_AZ_B'+Z_AZ_B'-Z_AZ_B'/|Z_B'|)\ket{\psi'}}\leq \\
	&\leq \norm{(\II-Z_AZ_B')\ket{\psi'}} + \norm{(Z_AZ_B'-Z_AZ_B'/|Z_B'|)\ket{\psi'}}\leq \\
	&\leq \epsilon + \sqrt{\bracket{\psi'}{(Z_AZ_B')^2+1-2|Z_AZ_B'|}{\psi'}}\leq \\
	&\leq \epsilon + \sqrt{\bracket{\psi'}{(Z_AZ_B')^2+1-2Z_AZ_B'}{\psi'}}= \\
	&= \epsilon + \norm{(\II-Z_AZ_B')\ket{\psi'}} \leq 2\epsilon.
\end{align*}
Analogously, from condition (\ref{extra_cond}) we have that $(Z_B')^2\leq \frac{\II}{\cos^2(\mu)}$. This, together with eq. (\ref{cond1}), implies that

\be
\|\II-(Z_B')^2\ket{\psi'}\|\leq \epsilon\left(\frac{1}{\cos(\mu)}+1\right). 
\label{Z_B_unit}
\ee

\noindent Using the relation $(Z_B-Z_B')^2=(1-|Z_B'|)^2\leq (1-|Z_B'|)^2(1+|Z'_B|)^2=[1-(Z_B')^2]^2$ we thus have that 

\be
\|(Z_B-Z_B')\ket{\psi'}\|\leq \epsilon\left(\frac{1}{\cos(\mu)}+1\right)\equiv \epsilon_3.
\ee

We now need an operator $X_B$ such that it is hermitian, unitary and behaves almost similar to $X_B'$. Consider $\norm{(X_B'^2-\II)\ket{\psi'}}$. By eq. (\ref{Z_B_unit}) and condition (\ref{extra_cond}) we have that
 
\begin{align*}
	\norm{(\II-X_B'^2)\ket{\psi'}} &= \norm{\left(\II-\frac{\II-\cos^2(\mu)(Z_B')^2}{\sin^2(\mu)}\right)\ket{\psi'}}\leq \\
	&\frac{\epsilon_3}{\tan^2(\mu)}.
\end{align*}

Define then $X_B\equiv X_B'/|X_B'|$. By the same argument we used to bound $\|(Z_B-Z_B')\ket{\psi}\|$, we arrive at

\be
\norm{(X_B-X_B')\ket{\psi'}}\leq \frac{\epsilon_3}{\tan^2(\mu)}\equiv \epsilon_4. 
\ee

Putting all together, we have that

\begin{align}
	\norm{(Z_A-Z_B)\ket{\psi'}} \leq 2\epsilon, \label{1}\\
	\norm{\Big(  \sin\theta X_A(\II+Z_B)-\cos\theta X_B(\II-Z_A) \Big)\ket{\psi'}} & \leq \epsilon'_2, \label{3}
\end{align}
where $\epsilon'_2=\epsilon_2+|\sin\theta|\epsilon_3+2|\cos\theta|\epsilon_4$. 

Let us see how one can obtain a robust self testing statement from (\ref{1}-\ref{3}). Recall the action of the isometry $\Phi$
\begin{align}
	 \Phi(\ket{\psi^{\prime}}) & = 
	 \frac{1}{4} (\II + Z_{A})  (\II + Z_{B}) \ket{\psi'}\ket{00} \nonumber\\
	 & +  \frac{1}{4} X_{B}(\II + Z_{A})   (\II - Z_{B}) \ket{\psi^{\prime}}\ket{01} \nonumber\\
	 & +  \frac{1}{4}X_{A}(\II- Z_{A})  (\II + Z_{B}) \ket{\psi^{\prime}}\ket{10}\nonumber\\
	 & +  \frac{1}{4} X_{A}   X_{B} (\II - Z_{A})(\II-Z_{B}) \ket{\psi^{\prime}}\ket{11}.\label{isometry}
\end{align}
We want to make some statements about this state compared to the ideal self tested state $\ket{\psi}=\cos\theta\ket{00}+\sin\theta\ket{11}$. The first term in (\ref{isometry}) can be approximated by $\frac{\II+Z_A}{2}\ket{\psi'}\ket{00}$, since

\be
\norm{ \frac{(\II + Z_{A})   (\II + Z_{B})}{4} \ket{\psi^{\prime}}\ket{00}-\frac{(\II + Z_{A})}{2}\ket{\psi^{\prime}}\ket{00}} \leq \epsilon.
\ee

\noindent The second term is close to zero, for

\begin{align}
	\frac{1}{4}\norm{ X_{B}(\II + Z_{A})   (\II - Z_{B}) \ket{\psi^{\prime}}\ket{01}} \leq \epsilon.
\end{align}
Similarly, the norm of the third term is bounded by the same amount. As for the last term in (\ref{isometry}), using (\ref{1}) and (\ref{3}), we can approximate it by $\tan(\theta)\frac{(\II+Z_A)}{2}\ket{\psi'}\ket{11}$ with an error of $\epsilon+ \frac{\epsilon_2'}{2|\cos(\theta)|}$. Thus we finally have 
\begin{align}
	&\norm{\Phi(\ket{\psi'})-\ket{junk}\ket{\psi}} \leq \overline{\epsilon} \nonumber\\
	&\hspace{2cm}+ \norm{ \left(\dfrac{\II+Z_A}{2\cos(\theta)}\ket{\psi'}-\ket{junk} \right)\ket{\psi} },
\end{align}

\noindent with $\overline{\epsilon}=4\epsilon + \frac{\epsilon_2'}{2|\cos(\theta)|}$.

We would like to identify 
\begin{align}
	\ket{junk}=\dfrac{(\II+Z_A)}{2\cos(\theta)}\ket{\psi'},
\end{align}
but the state on the right hand side may not be normalized. To do this, we have to bound the norm of the state $\frac{\II+Z_A}{2}\ket{\psi'}$. This can be done by noticing that the isometry $\Phi$ preserves the norm. By considering $\norm{\Phi(\ket{\psi'})}=1$, we obtain
\begin{align}
	1-\overline{\epsilon}\leq\norm{\dfrac{(\II+Z_A)}{2c}\ket{\psi'}} \leq 1 + \overline{\epsilon}.
\end{align}
Considering this uncertainty in the norm of the state $\ket{junk}$ thus we have the final robust bound 
\begin{align}
	\norm{\Phi(\ket{\psi'})-\ket{junk}\ket{\psi}} \leq  2\overline{\epsilon}.
\end{align}
This completes the proof of robust self testing on states with correlation close to the maximum violation of the Bell inequality $\BB(\alpha)$.

\subsection{General self testing for Any Dimension \label{selftestingd}}
\label{general}

\subsubsection{Isometry for High Dimension and Sufficient Conditions for Self Testing}

In the main text, in Figure 2, we show the circuit for high dimensional self testing. We will now provide the sufficient conditions for self testing. Let $\{P^{(i)}_A\}_{i=0}^{d-1}$ ($\{P^{(i)}_B\}_{i=0}^{d-1}$) be a complete set of orthogonal projectors in Alice's (Bob's) Hilbert space. If we have a set of $d-1$ local unitary operators, $X_A^{(k)}$ and $X_B^{(k)}$, where $k=0,1,\ldots,d-1$, satisfying
\begin{align}
	Z_{A/B} &= \sum_{i=0}^{d-1} \omega^i P_{A/B}^{(i)}, \label{cond11}\\
	P_A^{(i)}\ket{\psi'}&=P_B^{(i)}\ket{\psi'}, \;\;\forall i, \label{cond22} \\
	X_{A}^{(i)}P_{B}^{(i)}\ket{\psi'}&=\tan\delta_i (X_{B}^{(i)})^\dag P_{A}^0 \ket{\psi'}, \;\;\forall i  \label{cond33}
\end{align}
then the circuit in Figure 2 can be used for self testing. Recall that the control phase gate, $S_{A/B}$ and the control rotation gate, $R_{A/B}$ in the circuit are defined via
\begin{align}
	S_{A/B}\ket{k}\ket{\psi'} \rightarrow \ket{k}Z_{A/B}^k\ket{\psi'}, \\
	R_{A/B}\ket{k}\ket{\psi'} \rightarrow \ket{k}X_{A/B}^{(k)}\ket{\psi'},
\end{align}
One can easily check that if we input the state $\ket{\psi'}\ket{00}$ into the circuit we obtain the state 
\begin{align}
	\Phi(\ket{\psi'}) &= \ket{junk} \dfrac{\ket{00}+\sum_{i=1}^{d-1} \tan\delta_i\ket{ii}}{\sqrt{1+\sum_{i=1}^{d-1}\tan\delta_i}}, \\
	&= \ket{junk}\ket{\psi}
\end{align}
as output, where we have managed to self test the system $\ket{\psi'}\widetilde{=}\ket{\psi}$. Thus eqs. (\ref{cond11}-\ref{cond33}) are sufficient conditions for self testing.

\subsubsection{Correlations for Maximally Entangled Self Testing} \label{correlations}

Here we will provide the correlations sufficient to self test a $d$ dimensional maximally entangled states
\begin{align}
	\ket{\phi}=\dfrac{1}{\sqrt{d}} \sum_{i=0}^{d-1} \ket{ii}.
\end{align}
We will only need 3 measurements each on Alice's and Bob's side labelled by $(A_0,A_1,A_2)$ and $(B_0,B_1,B_2)$. Firstly, consider the correlations between the outcomes of $A_0$ and $B_0$ as shown in Table (\ref{a0b0}).
\begin{table}[htbp!]
	\centering

\begin{tabular}{|cc|c|c|c|c|c|c|c|c|c|c|}
\hline
& & \multicolumn{6}{c|}{$B_0$}  \\ \cline{3-8}
& & $\pb{0}{0}$ & $\pb{0}{1}$ & $\pb{0}{2}$ & $\pb{0}{3}$ & $\cdots$ & $\pb{0}{d-1}$ \\ \cline{1-8}
\multicolumn{1}{|c}{\multirow{6}{*}{$A_0$}} &
\multicolumn{1}{|c|}{$\pa{0}{0}$} & $\frac{1}{d}$ & 0 & 0 & 0& $\cdots$ &0    \\ \cline{2-8}
\multicolumn{1}{|c}{}                        &
\multicolumn{1}{|c|}{$\pa{0}{1}$} & 0 & $\frac{1}{d}$ & 0 & 0& $\cdots$ &0   \\ \cline{2-8}
\multicolumn{1}{|c}{}                        &
\multicolumn{1}{|c|}{$\pa{0}{2}$} & 0 & 0 & $\frac{1}{d}$ & 0&$\cdots$   &0 \\ \cline{2-8}
\multicolumn{1}{|c}{}         &
\multicolumn{1}{|c|}{$\pa{0}{3}$} & 0 & 0 & 0& $\frac{1}{d}$ &$\cdots$   &0 \\ \cline{2-8}
\multicolumn{1}{|c}{}         &
\multicolumn{1}{|c|}{$\vdots$} & $\vdots$ & $\vdots$ & $\vdots$ & $\vdots$ & $\ddots$  &$\vdots$  \\ \cline{2-8}
\multicolumn{1}{|c}{}         &
\multicolumn{1}{|c|}{$\pa{0}{d-1}$} & 0 & 0 & 0 & 0& $\cdots$ &$\frac{1}{d}$    \\ \cline{1-8}
\end{tabular}
	\caption{Correlations for $A_0$ and $B_0$. \label{a0b0}}
\end{table}
From Table (\ref{a0b0}), the correlations can be summarized as $\bracket{\psi'}{\pa{0}{i}\pb{0}{j}}{\psi'}=\delta_{i,j}/d$. Since the outcomes are represented by projective operators, we can deduce that
\begin{align}
	\pa{0}{i}\ket{\psi'}=\pb{0}{i}\ket{\psi'}.
\end{align}
To proceed, we need the correlations between $A_0$ and $B_1$, as shown in Table (\ref{a0b1}). We assume that $d$ is even, the case when $d$ is odd can be generalized accordingly.

	\begin{table}[htbp!]
		\centering
\begin{tabular}{|cc|c|c|c|c|c|c|c|c|c|c|c|}
\hline
& & \multicolumn{7}{c|}{$B_1$}  \\ \cline{3-9}
& & $\pb{1}{0}$ & $\pb{1}{1}$ & $\pb{1}{2}$ & $\pb{1}{3}$ & $\cdots$ & $\pb{1}{d-2}$ & $\pb{1}{d-1}$ \\ \cline{1-9}
\multicolumn{1}{|c}{\multirow{7}{*}{$A_0$}} &
\multicolumn{1}{|c|}{$\pa{0}{0}$} & $\frac{1}{d}\cos^2\theta_{0}$ & $\frac{1}{d}\sin^2\theta_{0}$ & 0 & 0& $\cdots$ &0 & 0     \\ \cline{2-9}
\multicolumn{1}{|c}{}                        &
\multicolumn{1}{|c|}{$\pa{0}{1}$} & $\frac{1}{d}\sin^2\theta_{0}$ & $\frac{1}{d}\cos^2\theta_{0}$ & 0 & 0& $\cdots$ &0 &0  \\ \cline{2-9}
\multicolumn{1}{|c}{}                        &
\multicolumn{1}{|c|}{$\pa{0}{2}$} & 0 & 0 & $\frac{1}{d}\cos^2\theta_{1}$ & $\frac{1}{d}\sin^2\theta_{1}$ &$\cdots$   &0 &0\\ \cline{2-9}
\multicolumn{1}{|c}{}         &
\multicolumn{1}{|c|}{$\pa{0}{3}$} & 0 & 0 & $\frac{1}{d}\sin^2\theta_{1}$& $\frac{1}{d}\cos^2\theta_{1}$ & $\cdots$ &0 &0 \\ \cline{2-9}
\multicolumn{1}{|c}{}         &
\multicolumn{1}{|c|}{$\vdots$} & $\vdots$ & $\vdots$ & $\vdots$ & $\vdots$ & $\ddots$  &$\vdots$ &$\vdots$ \\ \cline{2-9}
\multicolumn{1}{|c}{}         &
\multicolumn{1}{|c|}{$\pa{0}{d-2}$} & 0 & 0 &0&0&$\cdots$& $\frac{1}{d}\cos^2\theta_{\frac{d}{2}-1}$& $\frac{1}{d}\cos^2\theta_{\frac{d}{2}-1}$  \\ \cline{2-9}
\multicolumn{1}{|c}{}         &
\multicolumn{1}{|c|}{$\pa{0}{d-1}$} & 0 & 0 & 0 & 0& $\cdots$ & $\frac{1}{d}\sin^2\theta_{\frac{d}{2}-1}$ & $\frac{1}{d}\cos^2\theta_{\frac{d}{2}-1}$   \\ \cline{1-9}
\end{tabular}
	\caption{Correlations for $A_0$ and $B_1$. \label{a0b1}}
\end{table}

In fact, the correlations for the pairs $(A_1,B_0)$ and $(A_1,B_1)$ have the same structure: the outcomes of one measurement on Alice's side only have non zero correlation with the corresponding projectors in the same block. Thus for the moment we shall focus on one particular block of the correlations, with the outcomes $(\pa{0}{2m},\pa{0}{2m+1},\pa{1}{2m},\pa{1}{2m+1},\pb{0}{2m},\pb{0}{2m+1},\pb{1}{2m},\pb{1}{2m+1})$, where $m=0,1,\ldots,\frac{d}{2}-1$. For these projectors, we have the correlations summarized as shown in Table (\ref{a0b0block},\ref{a0b1block},\ref{a1b0block},\ref{a1b1block}). 
\begin{table}[htbp!]
	\centering
\begin{tabular}{|cc|c|c|}
	\hline
& & \multicolumn{2}{c|}{$B_0$}  \\ \cline{3-4}
& & $\pb{0}{2m}$ & $\pb{0}{2m+1}$  \\ \cline{1-4}
\multicolumn{1}{|c}{\multirow{2}{*}{$A_0$}} &
\multicolumn{1}{|c|}{$\pa{0}{2m}$} & $\frac{1}{d}$ &  0      \\ \cline{2-4}
\multicolumn{1}{|c}{}                        &
\multicolumn{1}{|c|}{$\pa{0}{2m+1}$} & 0 & $\frac{1}{d}$       \\ \cline{1-4}
\end{tabular}
\caption{Correlations for $A_0$ and $B_0$.\label{a0b0block}}
\end{table}
\begin{table}[htbp!]
	\centering
\begin{tabular}{|cc|c|c|}
	\hline
& & \multicolumn{2}{c|}{$B_1$}  \\ \cline{3-4}
& & $\pb{0}{2m}$ & $\pb{0}{2m+1}$  \\ \cline{1-4}
\multicolumn{1}{|c}{\multirow{2}{*}{$A_0$}} &
\multicolumn{1}{|c|}{$\pa{0}{2m}$} & $\frac{1}{d}\cos^2\theta_{m}$ &  $\frac{1}{d}\sin^2\theta_{m}$      \\ \cline{2-4}
\multicolumn{1}{|c}{}                        &
\multicolumn{1}{|c|}{$\pa{0}{2m+1}$} & $\frac{1}{d}\sin^2\theta_{m}$ &  0$\frac{1}{d}\cos^2\theta_{m}$      \\ \cline{1-4}
\end{tabular}
\caption{Correlations for $A_0$ and $B_1$.\label{a0b1block}}
\end{table}
\begin{table}[htbp!]
	\centering
\begin{tabular}{|cc|c|c|}
	\hline
& & \multicolumn{2}{c|}{$B_0$}  \\ \cline{3-4}
& & $\pb{0}{2m}$ & $\pb{0}{2m+1}$  \\ \cline{1-4}
\multicolumn{1}{|c}{\multirow{2}{*}{$A_1$}} &
\multicolumn{1}{|c|}{$\pa{0}{2m}$} & $\frac{1}{d}\cos^2\phi_{m}$ &  $\frac{1}{d}\sin^2\phi_{m}$      \\ \cline{2-4}
\multicolumn{1}{|c}{}                        &
\multicolumn{1}{|c|}{$\pa{0}{2m+1}$} & $\frac{1}{d}\sin^2\phi_{m}$ &  0$\frac{1}{d}\cos^2\phi_{m}$      \\ \cline{1-4}
\end{tabular}
\caption{Correlations for $A_1$ and $B_0$.\label{a1b0block}}
\end{table}
\begin{table}[htbp!]
	\centering
\begin{tabular}{|cc|c|c|}
	\hline
& & \multicolumn{2}{c|}{$B_1$}  \\ \cline{3-4}
& & $\pb{1}{2m}$ & $\pb{1}{2m+1}$  \\ \cline{1-4}
\multicolumn{1}{|c}{\multirow{2}{*}{$A_1$}} &
\multicolumn{1}{|c|}{$\pa{1}{2m}$} & $\frac{1}{d}\cos^2(\theta_{m}-\phi_m)$ &  $\frac{1}{d}\sin^2(\theta_m-\phi_{m})$      \\ \cline{2-4}
\multicolumn{1}{|c}{}                        &
\multicolumn{1}{|c|}{$\pa{1}{2m+1}$} & $\frac{1}{d}\sin^2(\theta_m-\phi_{m})$ &  $\frac{1}{d}\cos^2(\theta_{m}-\phi_m)$      \\ \cline{1-4}
\end{tabular}
\caption{Correlations for $A_1$ and $B_1$.\label{a1b1block}}
\end{table}

The correlations are non zero only in a block diagonal manner. Within each block, the correlations are labelled by the parameter $\theta_m,\phi_m$, which can always be chosen to be $0\leq\theta_m,\phi_m\leq \pi/2$. First of all, recall that
\begin{align}
	\pa{0}{i}\ket{\psi'}=\pb{0}{i}\ket{\psi'}, \label{proof1}
\end{align}
and we define the operators
\begin{align}
	O^{B_0}_m\equiv \pb{0}{2m}-\pb{0}{2m+1}, \\
	O^{B_1}_m\equiv \pb{1}{2m}-\pb{1}{2m+1}, \\
	O^{A_0}_m\equiv \pa{0}{2m}-\pb{0}{2m+1}, \\
	O^{A_1}_m\equiv \pa{1}{2m}-\pb{1}{2m+1}. 	\label{proof2}
\end{align}
From (\ref{proof1}) and (\ref{proof2}), it is easy to check that $\bracket{\psi'}{O^{B_0}_m(O^{B_1}_m-\cos\phi_m O^{B_0}_m)}{\psi'}=0$. Thus $O^{B_0}_m\ket{\psi'}\perp (O^{B_1}_m-\cos\phi_m O^{B_0}_m)\ket{\psi'}$. It can also be checked that the norm of the two vectors are given by
\begin{align}
	\norm{O^{B_0}_m\ket{\psi'}}&=\sqrt{\frac{2}{d}}, \label{proof30}\\
	\norm{(O^{B_1}_m-\cos\phi_m O^{B_0}_m)\ket{\psi'}}&=\sqrt{\frac{2}{d}}\sin\phi_m. \label{proof3}
\end{align}
Since the two vectors (\ref{proof30}) and (\ref{proof3}) are orthogonal, we can try to decompose the vector $O^{A_1}_m\ket{\psi'}$ into them. The span of the two orthogonal vectors (\ref{proof30}) and (\ref{proof3}) may not contain the vector $O^{A_1}_m\ket{\psi'}$ in general but as well see, in this case, it does. Since we have 
\begin{align}
	\bracket{\psi'}{O^{B_0}_m O^{A_1}_m}{\psi'}= \frac{2}{d}\cos\theta_m, \\
	\bracket{\psi'}{(O^{B_1}_m-\cos\phi_mO^{B_0}_m) O^{A_1}_m}{\psi'}= \frac{2}{d}\sin\theta_m,
\end{align}
we can decompose 
\begin{align}
	O^{A_1}_m\ket{\psi'} = \cos\theta_m O^{B_0}_m\ket{\psi'} + \sin\theta_m (O^{B_1}_m-\cos\phi_mO^{B_0}_m)\ket{\psi'}. \label{proof5}
\end{align}
Since, the norm on the left hand side equal the norm on the right hand side we conclude that the decomposition in (\ref{proof5}) is complete. Notice that $(O^{B_1}_m)^2=\II^{B_1}_m$, where $\II^{B_1}_m$ is the identity on the subspace defined by the outcomes $\pb{1}{2m},\pb{1}{2m+1}$. Similarly, we have 
\begin{align}
	(O^{B_0}_m)^2=\II^{B_0}_m, \nonumber\\
	(O^{B_1}_m)^2=\II^{B_1}_m, \nonumber\\
	(O^{A_0}_m)^2=\II^{A_0}_m, \nonumber\\
	(O^{A_1}_m)^2=\II^{A_1}_m. \label{proof6}
\end{align}
The 4 identity operators in (\ref{proof6}) may not be the same in general. However for the correlations shown in Table (\ref{a0b0block},\ref{a0b1block},\ref{a1b0block},\ref{a1b1block}), they satisfy the condition
\begin{align}
	\bracket{\psi'}{I^{B_i}_mI^{A_j}_m}{\psi'}=\norm{I^{B_i}_m\ket{\psi'}}\norm{I^{A_j}_m\ket{\psi'}}. \label{proof9}
\end{align}
for all $i,j={0,1}$. Thus, from Cauchy-Schwartz inequality, we conclude that $I^{A_i}_m\ket{\psi'}=I^{B_j}_m\ket{\psi'}$ for all $i,j={0,1}$. Since their actions on the state $\ket{\psi'}$ are the same, we will simply write $\II_m$ as opposed to the notations in (\ref{proof6}), but bearing in mind that it still depends on $m$, which refers to the different block.

Now, from (\ref{proof5}), we square both sides and after simplifying we obtain
\begin{align}
	\{O^{B_0}_m,O^{B_1}_m\}\ket{\psi'} &= 2\cos\phi_m \II_m\ket{\psi'}, \\
	\frac{\{O^{B_0}_m,O^{B_1}_m\}}{2\cos\phi_m} \ket{\psi'} &= \II_m\ket{\psi'}. \label{proof7}
\end{align}
Since $\ket{\psi'}$ is a bipartite state, by decomposing it into its Schmidt basis, we conclude that the two operators on both side in (\ref{proof7}) are equal, 
\begin{align}
	\frac{\{O^{B_0}_m,O^{B_1}_m\}}{2\cos\phi_m} = \II_m. \label{proof8}
\end{align}
We are now ready to define a unitary operators for our self testing,
\begin{align}
	X^B_m \equiv \dfrac{O^{B_1}_m-\cos\phi_m O^{B_0}_m}{\sin\phi_m} + (\II-\II_m), \label{proof10}
\end{align}
which are unitary and hermitian. It is easy to check that it is unitary by using (\ref{proof8}), or in other words, $(X^B_m)^2=I$. Another piece of important information is that the operator in (\ref{proof10}) satisfy the equation
\begin{align}
	X^B_m \pb{0}{2m+1}\ket{\psi'}=\pb{0}{2m}X^B_m\ket{\psi'}. \label{proof11}
\end{align}
which is very similar to (\ref{cond33}). Similarly, we can do the same analysis on Alice's side, and define
\begin{align}
	X^A_m \equiv \dfrac{O^{A_1}_m-\cos\theta_m O^{A_0}_m}{\sin\theta_m} + (\II-\II_m), \label{proof12}
\end{align}
which has the property
\begin{align}
	X^A_m \pa{0}{2m+1}\ket{\psi'}=\pa{0}{2m}X^A_m\ket{\psi'}. \label{proof13}
\end{align}
In addition, using the decomposition in (\ref{proof5}), it is also easy to show that 
\begin{align}
	X^A_m\ket{\psi'}=X^B_m\ket{\psi'}. \label{proof13new}
\end{align}

Thus from correlations of the four measurements $(A_0,A_1,B_0,B_1)$, we managed to construct operators which rotate all $\pa{0}{2m+1}\ket{\psi'} \rightarrow \pa{0}{2m}\ket{\psi'}$ and $\pb{0}{2m+1}\ket{\psi'}\rightarrow \pb{0}{2m}\ket{\psi'}$ for all $m=0,1,\ldots,\frac{d}{2}-1$. However, the goal is to be able to transform all  $\pa{0}{j}\ket{\psi'} \rightarrow \pa{0}{0}\ket{\psi'}$ and $\pb{0}{j}\ket{\psi'}\rightarrow \pb{0}{0}\ket{\psi'}$ for all $j$.

To accomplish this, we require a third set of measurements, $(A_2,B_2)$. In contrast to $(A_1,B_1)$ which has block correlations with $(A_0,B_0)$ for the outcomes $(0,1),(2,3),\ldots,(2m,2m+1),\ldots,(d-2,d-1)$, the measurements $(A_2,B_2)$ has block correlations with $(A_0,B_0)$ for the outcomes $(1,2),(3,4),\ldots,,(2n+1,2n+2),\ldots,(d-1,0)$, where $n$ runs from $0$ to $\frac{d}{2}-1$. Note that, for clarity, we shall refer to the dummy variable $m$ for the block correlations between $(A_0,A_1,B_0,B_1)$ and $n$ for block correlations between $(A_0,A_2,B_0,B_2)$. Also, the outcomes $\pa{0}{d}$ is understood to be $\pa{0}{0}$. 

For instance, the correlations between $A_0$ and $B_2$ has the form as shown in Table (\ref{a0b2}).

\begin{table}[htbp!]
	\centering
\begin{tabular}{|cc|c|c|c|c|c|c|c|c|c|c|c|}
\hline
& & \multicolumn{7}{c|}{$B_2$}  \\ \cline{3-9}
& & $\pb{2}{0}$ & $\pb{2}{1}$ & $\pb{2}{2}$ & $\pb{2}{3}$ & $\cdots$ & $\pb{2}{d-2}$ & $\pb{2}{d-1}$ \\ \cline{1-9}
\multicolumn{1}{|c}{\multirow{7}{*}{$A_0$}} &
\multicolumn{1}{|c|}{$\pa{0}{0}$} & $\frac{1}{d}\cos^2\theta'_{\frac{d}{2}-1}$ &0 & 0 & 0& $\cdots$ &0 &  $\frac{1}{d}\sin^2\theta'_{\frac{d}{2}-1}$     \\ \cline{2-9}
\multicolumn{1}{|c}{}                        &
\multicolumn{1}{|c|}{$\pa{0}{1}$} & 0 & $\frac{1}{d}\cos^2\theta'_{0}$ & $\frac{1}{d}\sin^2\theta'_{0}$  & 0& $\cdots$ &0 &0  \\ \cline{2-9}
\multicolumn{1}{|c}{}                        &
\multicolumn{1}{|c|}{$\pa{0}{2}$} & 0 & $\frac{1}{d}\sin^2\theta'_{0}$ & $\frac{1}{d}\cos^2\theta'_{0}$ &0 &$\cdots$   &0 &0\\ \cline{2-9}
\multicolumn{1}{|c}{}         &
\multicolumn{1}{|c|}{$\pa{0}{3}$} & 0 & 0 & 0 & $\frac{1}{d}\cos^2\theta'_{1}$ & $\cdots$ &0 &0 \\ \cline{2-9}
\multicolumn{1}{|c}{}         &
\multicolumn{1}{|c|}{$\vdots$} & $\vdots$ & $\vdots$ & $\vdots$ & $\vdots$ & $\ddots$  &$\vdots$ &$\vdots$ \\ \cline{2-9}
\multicolumn{1}{|c}{}         &
\multicolumn{1}{|c|}{$\pa{0}{d-1}$} & 0 & 0 & 0 & 0& $\cdots$ & $\frac{1}{d}\sin^2\theta_{\frac{d}{2}-2}$ & 0  \\ \cline{2-9}
\multicolumn{1}{|c}{}         &
\multicolumn{1}{|c|}{$\pa{0}{d-2}$} & $\frac{1}{d}\sin^2\theta'_{\frac{d}{2}-1}$ & 0 &0&0&$\cdots$ & 0 & $\frac{1}{d}\cos^2\theta'_{\frac{d}{2}-1}$  \\ \cline{1-9}
\end{tabular}
	\caption{Correlations for $A_0$ and $B_2$. \label{a0b2}}
\end{table}

In this case, the block correlations are given by
\begin{table}[htbp!]
	\centering
\begin{tabular}{|cc|c|c|}
	\hline
& & \multicolumn{2}{c|}{$B_0$}  \\ \cline{3-4}
& & $\pb{0}{2m}$ & $\pb{0}{2m+1}$  \\ \cline{1-4}
\multicolumn{1}{|c}{\multirow{2}{*}{$A_0$}} &
\multicolumn{1}{|c|}{$\pa{0}{2m}$} & $\frac{1}{d}$ &  0      \\ \cline{2-4}
\multicolumn{1}{|c}{}                        &
\multicolumn{1}{|c|}{$\pa{0}{2m+1}$} & 0 & $\frac{1}{d}$       \\ \cline{1-4}
\end{tabular}
\caption{Correlations for $A_0$ and $B_0$.\label{a0b0blocknew}}
\end{table}
\begin{table}[htbp!]
	\centering
\begin{tabular}{|cc|c|c|}
	\hline
& & \multicolumn{2}{c|}{$B_2$}  \\ \cline{3-4}
& & $\pb{2}{2m}$ & $\pb{2}{2m+1}$  \\ \cline{1-4}
\multicolumn{1}{|c}{\multirow{2}{*}{$A_0$}} &
\multicolumn{1}{|c|}{$\pa{0}{2m}$} & $\frac{1}{d}\cos^2\theta'_{m}$ &  $\frac{1}{d}\sin^2\theta'_{m}$      \\ \cline{2-4}
\multicolumn{1}{|c}{}                        &
\multicolumn{1}{|c|}{$\pa{0}{2m+1}$} & $\frac{1}{d}\sin^2\theta'_{m}$ &  0$\frac{1}{d}\cos^2\theta'_{m}$      \\ \cline{1-4}
\end{tabular}
\caption{Correlations for $A_0$ and $B_2$.\label{a0b2block}}
\end{table}
\begin{table}[htbp!]
	\centering
\begin{tabular}{|cc|c|c|}
	\hline
& & \multicolumn{2}{c|}{$B_0$}  \\ \cline{3-4}
& & $\pb{0}{2m}$ & $\pb{0}{2m+1}$  \\ \cline{1-4}
\multicolumn{1}{|c}{\multirow{2}{*}{$A_2$}} &
\multicolumn{1}{|c|}{$\pa{2}{2m}$} & $\frac{1}{d}\cos^2\phi'_{m}$ &  $\frac{1}{d}\sin^2\phi'_{m}$      \\ \cline{2-4}
\multicolumn{1}{|c}{}                        &
\multicolumn{1}{|c|}{$\pa{2}{2m+1}$} & $\frac{1}{d}\sin^2\phi'_{m}$ &  0$\frac{1}{d}\cos^2\phi'_{m}$      \\ \cline{1-4}
\end{tabular}
\caption{Correlations for $A_2$ and $B_0$.\label{a2b0block}}
\end{table}
\begin{table}[htbp!]
	\centering
\begin{tabular}{|cc|c|c|}
	\hline
& & \multicolumn{2}{c|}{$B_2$}  \\ \cline{3-4}
& & $\pb{2}{2m}$ & $\pb{2}{2m+1}$  \\ \cline{1-4}
\multicolumn{1}{|c}{\multirow{2}{*}{$A_2$}} &
\multicolumn{1}{|c|}{$\pa{2}{2m}$} & $\frac{1}{d}\cos^2(\theta'_{m}-\phi'_m)$ &  $\frac{1}{d}\sin^2(\theta'_m-\phi'_{m})$      \\ \cline{2-4}
\multicolumn{1}{|c}{}                        &
\multicolumn{1}{|c|}{$\pa{2}{2m+1}$} & $\frac{1}{d}\sin^2(\theta'_m-\phi'_{m})$ &  $\frac{1}{d}\cos^2(\theta'_{m}-\phi'_m)$      \\ \cline{1-4}
\end{tabular}
\caption{Correlations for $A_2$ and $B_2$.\label{a2b2block}}
\end{table}
By doing the same analysis as above, we will obtain two additional local operators on Alice's and Bob's side defined by
\begin{align}
	Y^A_n &\equiv \dfrac{N^{A_1}_n-\cos\theta'_n N^{A_0}_n}{\sin\theta'_n} + (\II-\II_n), \label{proof15} \\
	Y^B_n &\equiv \dfrac{N^{B_1}_n-\cos\phi'_n N^{B_0}_n}{\sin\phi'_n} + (\II-\II_n), \label{proof16}
\end{align}
where
\begin{align*}
	N^{b_0}_m\equiv \pb{0}{2m}-\pb{0}{2m+1}, \\
	N^{b_1}_m\equiv \pb{2}{2m}-\pb{2}{2m+1}, \\
	N^{a_0}_m\equiv \pa{0}{2m}-\pb{0}{2m+1}, \\
	N^{a_1}_m\equiv \pa{2}{2m}-\pb{2}{2m+1}.
\end{align*}
They have the property
\begin{align}
	Y^A_n \pa{0}{2n+2}\ket{\psi'}=\pa{0}{2n+1}Y^A_n\ket{\psi'}, \label{proof17} \\
	Y^B_n \pb{0}{2n+2}\ket{\psi'}=\pb{0}{2n+1}Y^B_n\ket{\psi'}. \label{proof18}
\end{align}
We now have local unitary operators such that we can transform all the vectors $\pa{0}{2n+2}\ket{\psi'}\rightarrow \pa{0}{2n+1}\ket{\psi'}$ and $\pb{0}{2n+2}\ket{\psi'}\rightarrow \pb{0}{2n+1}\ket{\psi'}$. Furthermore, they satisfy the relation
\begin{align}
	Y^A_n\ket{\psi'}=Y^B_n\ket{\psi'}. \label{proof21}
\end{align}

We are now ready to construct the sufficient conditions in (\ref{cond11}-\ref{cond33}) to complete the self testing. For (\ref{cond11}), it is simply a definition and there is no need to proof it. For (\ref{cond22}), we can easily take the projectors for the outcomes from the measurements $A_0$ and $B_0$, since all their projectors satisfy (\ref{proof1}), by taking $P_A^{(i)}=\pa{0}{i}$ and $P_B^{(i)}=\pb{0}{i}$ we thus have the first condition (\ref{cond22}). For third condition, we need a suitable local unitary operator on Alice's and Bob's side. From (\ref{proof13}) and (\ref{proof17}), it is easy to see that to achieve this we just need to define
\begin{align}
	X_A^{(i)} \equiv \left\{  \begin{array}{cl}
		X^A_0Y^A_0X^A_1Y^A_1\ldots Y^A_{k-1}X^A_k,    &\textrm{for odd }i=2k+1,  \\
		X^A_0Y^A_0X^A_1Y^A_1\ldots X^A_{k-1}Y^A_{k-1},    &\textrm{for even }i=2k,
	\end{array}    \right.. \\
	X_B^{(i)} \equiv \left\{  \begin{array}{cl}
		X^B_0Y^B_0X^B_1Y^B_1\ldots Y^B_{k-1}X^B_k,    &\textrm{for odd }i=2k+1,  \\
		X^B_0Y^B_0X^B_1Y^B_1\ldots X^B_{k-1}Y^B_{k-1},    &\textrm{for even }i=2k,
	\end{array}    \right.,
\end{align}
Indeed, then we have for the case of even $i=2k$,
\begin{align*}
	 &X_A^{(i)}P_B^{(i)} \ket{\psi'} \\
	=& X_A^{(i)}P_A^{(i)}\ket{\psi'} , \\
	=& X^A_0Y^A_0X^A_1Y^A_1\ldots X^A_{k-1}Y^A_{k-1} \pa{0}{2k}\ket{\psi'}, \\
	=& X^A_0Y^A_0X^A_1Y^A_1\ldots X^A_{k-1} \pa{0}{2(k-1)+1} Y^A_{k-1}\ket{\psi'}, \\
	=& X^A_0Y^A_0X^A_1Y^A_1\ldots  \pa{0}{2(k-1)} X^A_{k-1} Y^B_{k-1}\ket{\psi'}, \\
	=& \pa{0}{0} Y^B_{k-1}X^B_{k-1}\ldots Y^B_1X^B_1Y^B_0X^B_0 \ket{\psi'}, \\
	=& (X_B^{(i)})^\dag P_A^0 \ket{\psi'}, \;\;\;\forall i
\end{align*}
which is simply the condition (\ref{cond33}) with $\tan\delta_i=1$ for all $i$. The situation when $i$ is odd is similar. We thus complete the proof, and the state $\ket{\psi'}$ which produces the correlations above are self tested into maximally entangled states $\ket{\phi}=\sum_{i=0}^{d-1}\ket{ii}/\sqrt{d}$.

\end{document}